\newcommand{\be}{\begin{equation}}
\newcommand{\ee}{\end{equation}}
\newcommand{\bea}{\begin{eqnarray}}
\newcommand{\eea}{\end{eqnarray}}
\def\squareforqed{\hbox{\rlap{$\sqcap$}$\sqcup$}}
\def\qed{\ifmmode\squareforqed\else{\unskip\nobreak\hfil
\penalty50\hskip1em\null\nobreak\hfil\squareforqed
\parfillskip=0pt\finalhyphendemerits=0\endgraf}\fi}
\def\endenv{\ifmmode\;\else{\unskip\nobreak\hfil
\penalty50\hskip1em\null\nobreak\hfil\;
\parfillskip=0pt\finalhyphendemerits=0\endgraf}\fi}
\newcommand{\tr}{\text{Tr}}
\newcommand{\I}{\mathbbm{1}}
\newcommand{\ket}[1]{|#1\rangle}
\newcommand{\bra}[1]{\langle#1|}
\newcommand{\la}{\langle}
\newcommand{\ra}{\rangle}
\newtheorem*{rep@theorem}{\rep@title}
\newcommand{\newreptheorem}[2]{%
\newenvironment{rep#1}[1]{%
 \def\rep@title{#2 \ref{##1}}%
 \begin{rep@theorem}}%
 {\end{rep@theorem}}}
\newtheorem{thm}{Theorem}
\newtheorem{lemma}{Lemma}
\newtheorem{example}{Example}
\begin{document}

\title{Single-shot antidistinguishability of unitary operations}

\author{Satyaki Manna}
\email{mannasatyaki@gmail.com}
\affiliation{Department of Physics, School of Basic Sciences, Indian Institute of Technology Bhubaneswar, Odisha 752050, India}
\affiliation{School of Physics, Indian Institute of Science Education and Research Thiruvananthapuram, Kerala 695551, India}
\author{Anandamay Das Bhowmik}
\email{ananda.adb@gmail.com}
\affiliation{School of Physics, Indian Institute of Science Education and Research Thiruvananthapuram, Kerala 695551, India}


\begin{abstract}
    The notion of antidistinguishability captures the possibility of ruling out certain alternatives in a quantum experiment without identifying the actual outcome. Although extensively studied for quantum states, the antidistinguishability of quantum channels remains largely unexplored. In this work, we investigate the single-shot antidistinguishability of unitary operations. We analyse two scenarios: antidistinguishability with single-system probes and with entangled probes. For sets of three unitaries, we first prove that all maximally entangled states are equivalent in their performance as probes. In the qubit case, we further establish that maximally entangled probes are always sufficient for perfect antidistinguishability: if a set of three unitaries acting on two-dimensional Hilbert space is antidistinguishable with either a single-system or non-maximally entangled probe, then it is also antidistinguishable with a maximally entangled one. However, in higher dimension,  this sufficiency fails. From \textit{dimension 3}, there exists a set of unitaries that are antidistinguishable with non-maximally entangled probe or single-system probe but not with maximally entangled probe. We also establish that union of two antidistinguishable sets of three unitaries acting on two-dimensional Hilbert space also forms a set of antidistinguishable unitaries. Lastly, we provide methods to construct antidistinguishable unitaries from non-antidistinguishable ones. 
\end{abstract}

\maketitle


\section{Introduction}
The concept of distinguishability of physical processes plays a crucial role in understanding quantum theory and, by extension, the physical world. Alongside distinguishability, there exists a weaker notion known as antidistinguishability. While distinguishability allows one to identify which process occurred from a set of known processes based on an outcome, antidistinguishability enables the negative identification of certain processes. In addition to distinguishability \cite{leifer2014,barrett,chaturvedi2021,Chaturvedi2020quantum,bhowmik2022}, the notion of antidistinguisahbility has significant implications for discussions on the reality of quantum states \cite{pbr,ray2024}. Beyond foundational insights, distinguishability and antidistinguishability find broad applications in quantum information and communication \cite{PhysRevResearch.6.043269,pandit,leifer}.

Extensive research on these tasks has focused primarily on quantum states, though the antidistinguishability of quantum states \cite{Caves02,Heinosaari_2018,Johnston2025tightbounds,PhysRevResearch.2.013326,bandyopadhyay,uola,yao2025} has received comparatively less attention than distinguishability \cite{Hellstrom,Chefles_2000,benett,walgate,watrous,vedral,JánosABergou_2007,Virmani_2001,ghosh,halder,ghosh_s}. Despite being fewer in number, there is a considerable body of work addressing the distinguishability of quantum channels \cite{Puchala,watrous2,sacchi2005,acin2001,Dariano,ji2006,duan2009,harrow2010,ziman2010,maximallyentangled,PhysRevA.111.022221,njp2021,globalvslocal}. However, the study of antidistinguishability for quantum channels remains nearly unexplored except some recent works \cite{PRXQuantum,ji2025barycentricbounds,ji2025conversebounds,PhysRevA.111.022221}. This disparity arises from the inherent complexity of the problem: antidistinguishability requires consideration of at least three different processes, as the notions of distinguishability and antidistinguishability coincide in the context of two processes. One of the earliest treatments of antidistinguishability for channels is presented in \cite{PhysRevA.111.022221}, where the channels under consideration are quantum measurements. In a similar vein, this work focusses on the antidistinguishability of unitary operations, since dealing with general quantum channels is often highly intricate.

Unitary operators represent one of the most fundamental classes of quantum channels and have a significant importance in quantum communication \cite{densecoding,Brub}, information processing \cite{science.1090790}, error correction \cite{You_2012}, cryptography \cite{ekert}, and computation\cite{HUANG2022127863}. While there exists a body of research on the discrimination of unitary operators \cite{acin2001,maximallyentangled} in various contexts, the problem of antidiscrimination remains unexplored. The present study aims to provide a foundational perspective on the antidistinguishability of unitaries, analogous to the well-studied discrimination problem. 

This work formulates the problem of antidistinguishing a priori known sets of unitary operations sampled from an equal probability distribution. Before addressing the main problem, a brief overview of the antidistinguishability of quantum states is provided. The notion of antidistinguishability for unitary operations is then defined within the single-shot scenario, demonstrating that the antidistinguishability of unitaries ultimately reduces to the antidistinguishability of the corresponding evolved states for an optimally chosen initial probe state. Based on the nature of the initial probe, scenarios are categorized into two types: antidistinguishability with a single-system probe and antidistinguishability with an entangled-system probe. The study primarily is directed on evaluating the antidistinguishability of three unitaries.
Subsequently, it is shown that all maximally entangled states perform equivalently in the antidistinguishability task for any three unitaries. The analysis then progresses to ranking different probes in this context. The results reveal that for any three unitaries acting on $\mathbbm{C}^2$, if they are antidistinguishable using a non-maximally entangled probe or a single-system probe, they are also antidistinguishable using a maximally entangled state. This implies that the maximally entangled state serves as the optimal probe for antidistinguishing three unitaries acting on  two-dimensional Hilbert space. However, this property does not hold when the unitary dimension increases to three. In higher dimensions, there exists a set of three unitaries that are antidistinguishable using single-system or non-maximally entangled probes but not with maximally entangled probes. After that, we move into the results regarding the features of a set of antidistinguishable unitaries. We showed that the union of two antidistinguishable sets of three unitaries acting on $\mathbbm{C}^2$ also forms a set of antidistinguishable unitaries. Additionally, some methods are proposed for constructing antidistinguishable sets from those that are not antidistinguishable. These findings draw direct parallels to previously established results on the antidistinguishability of quantum states.

The structure of the paper is as follows: the subsequent section provides an overview of the antidistinguishability of quantum states. Following this, the single-shot antidistinguishing task for unitary operations is formulated. Section \ref{results} presents the main results. The conclusion summarizes the key findings and discusses open problems and possible directions for future research.

\section{antidistinguishability of quantum states}
Assume, we are given $n$ previously known quantum states $\{\rho_k\}_{k=1}^n$ sampled from the probability distribution $\{q_k\}_{k=1}^n$.
Antidistinguishability of $n$ quantum states $\{\rho_k\}_{k=1}^n$ is a linear function \cite{PhysRevA.111.022221} which is defined as,
\bea  
\mathcal{A}[\{\rho_k\}_k,\{q_k\}_k] &= & \max_{\{M_a\}_a}\Bigg\{ \sum_{k,a} q_k  p(a\neq k|\rho_k,M_a)\Bigg\}.
\eea
 We are taking optimization over the measurement $\{M_a\}_{a=1}^n$ such that the above quantity attains its maximum value. As $\sum_a (p(a\neq k|\rho_k,M)+p(a=k|\rho_k,M))=1$ and $\sum_k q_k =1$, the above expression becomes,
\bea\label{pA}
\mathcal{A}[\{\rho_k\}_k,\{q_k\}_k] 
&=& 1 - \min_{\{M_k\}_k}\left\{\sum_{k}q_k \ \tr(\rho_k M_k)\right\} .
\eea 

The necessary and sufficient conditions \cite{Caves02} for perfect antidistinguishability of three pairwise non-orthogonal pure quantum states, i.e.,
\be \label{ADSpsi123}
\mathcal{A}[\{\ket{\psi_1},\ket{\psi_2},\ket{\psi_3}\}] = 1,
\ee 
are as follows:
\begin{subequations}\label{condAS}
\be
    x_1 + x_2 + x_3 < 1 
\ee
\be 
(x_1+x_2+x_3-1)^2 \geqslant 4x_1x_2x_3 ,
\ee 
\text{  where $x_1=|\la \psi_1|\psi_2\ra|^2, x_2=|\la \psi_1|\psi_3\ra|^2, x_3=|\la \psi_2|\psi_3\ra|^2$.}\\
\end{subequations}

If we increase the number of the states in the set, ref. \cite{Johnston2025tightbounds} gives some necessary and sufficient (not simultaneously) conditions for the antidistinguishability and non-antidistinguishability of the given set depending on the modulus of pairwise inner products of the states.  One useful sufficient condition is as follows:
For any set of states $\{\ket{\psi_i}\}_{i=1}^n$, if
\be\label{scond}
|\la\psi_i|\psi_j\ra|\leqslant\frac{1}{\sqrt{2}}\sqrt{\frac{n-2}{n-1}},
\ee
for all $1\leqslant i\neq j\leqslant n$, then the set is antidistinguishable.

\section{antidistinguishability of unitary operations}
Unitary operator $(U)$ is a linear operator $U:H\rightarrow H$ on a Hilbert space $H$ that satisfies $U^\dagger U=UU^\dagger=\I$.

We consider a priori known set of $r$ unitaries $\{U_x\}_x$ acting on a $d$-dimensional quantum state and $x\in\{1,\cdots,r\}$. We assume that all the unitaries are sampled from a probability distribution $\{p_x\}_x$, i.e., $p_x>0$ and $\sum_x p_x=1$. In this paper, we consider the unitaries which are not distinguishable as distinguishable unitaries are trivially antidistinguishable. To antidistinguish the unitaries, the unitary device is fed with a known quantum state, single or entangled and the device carries out one of $r$ unitary operations. After this process, the device gives an evolved state as the output. Therefore, one can perform any measurement on the evolved state and this measurement can be optimized such that the antidistinguishability of these evolved states will be maximum. Let us describe this optimal measurement by a set of POVM elements $\{N_{b}\}_b$, where $b\in\{1,\cdots,r\}$ and $\sum_{b=1}^r N_{b}=\I$. The protocol is successful in antidistinguishing the unitaries if $b$ is not equal to $x$. Any classical post-processing of outcome $b$ can be included in the measurement $\{N_{b}\}_b$. Depending on the initial probing state, we can formulate two situations which are described in detail in the next subsection.
\subsection{With single system probe}
At first, the unitary device is given a single quantum state $\rho$. After the device applying any of the unitaries from the set $\{U_x\}_{x=1}^r$, the output state will be $U_x\rho U_x^\dagger$. After performing the measurement $\{N_b\}_{b=1}^r$, the antidistinguishability of the set of unitaries becomes the antidistinguishability of the set of evolved states. So antidistinguishability in this scenario, denoted by $\mathcal{A}_S$, is defined as,
\bea\label{A_S}
\mathcal{A}_S\left[\{U_x\}_x,\{p_x\}_x\right]&=& \max_\rho \sum_x p_xp(b\neq x|x)\nonumber\\
&=& 1-\max_\rho \sum_x p_xp(b=x|x)\nonumber\\
&=& 1-\min_{\rho,N_b}\sum_x p_x\tr\left(U_x\rho U_x^\dagger N_{b=x}\right)\nonumber\\
&=& \max_{\rho} \mathcal{A}\left[\left\{U_x\rho U_x^\dagger\right\}_x,\{p_x\}_x\right].\nonumber
\eea
$\mathcal{A}\left[\{\rho_k\}_k,\{q_k\}_k\right]$ denotes the antidistinguishability of the set of states $\{\rho_k\}_k$, sampled from the probability distribution $\{q_k\}_k$. As of now, we are interested to antidistinguish three unitary operations $U_1, U_2, U_3$. If a pure state $\rho=\ket{\psi}\bra{\psi}$ acts as the probe, we need to compute the square of inner product terms $\left|\bra{\psi} U_i^\dagger U_j\ket{\psi}\right|^2$, where $i,j\in\{1,2,3\}$ and $i\neq j$, to find the antidistinguishability of three unitaries by employing \eqref{condAS}. As $U_i^\dagger U_j$ is an unitary, we can write the spectral decomposition of this operator as, $U_i^\dagger U_j=\sum_{l=1}^d e^{\mathbbm{i}\theta_{l_{ij}}}\ket{\psi_{l_{ij}}}\bra{\psi_{l_{ij}}}$, where $e^{\mathbbm{i}\theta_{l_{ij}}}$ are the eigenvalues of $U_i^\dagger U_j$ and $\ket{\psi_{l_{ij}}}$ is the eigenvector corresponding to $l$'th eigenvalue. Therefore $\ket{\psi}$ can be decomposed into the linear combination of the eigenstates $\ket{\psi_{l_{ij}}}$, i.e., $\ket{\psi}=\sum_{l=1}^d\alpha_{l_{ij}}\ket{\psi_{l_{ij}}} $. So,
\bea\label{As_conv}
\left|\bra{\psi} U_i^\dagger U_j\ket{\psi}\right|^2&=& \left|\sum_{l=1}^d]\left|\alpha_{l_{ij}}\right|^2 e^{\mathbbm{i}\theta_{l_{ij}}}\right|^2\nonumber\\
&=& \left|con\{e^{\mathbbm{i}\theta_{l_{ij}}}\}\right|^2,
\eea
where $con\{e^{\mathbbm{i}\theta_{l_{ij}}}\}$ denotes the set of complex numbers that can be written as the convex combinations of $\{e^{\mathbbm{i}\theta_{l_{ij}}}\}$.
\subsection{With entangled probe}
Now, the unitary device is fed with a $d\otimes d'$ entangled state $\rho_{AB}$ and the device applies one of the unitaries from the set $\{U_x\}_{x=1}^r$ on the part A of the entangled state. The evolved state will be $(U_x\otimes\I)\rho_{AB}(U_x^\dagger\otimes\I)$. Then one can perform the optimal measurement of dimension $dd'$. Similarly, the antidistinguishability of the unitaries in this scenario, denoted as $\mathcal{A}_E$, can be written as,
\bea\label{A_E}
&&\mathcal{A}_E\left[\left\{U_x\right\}_x,\{p_x\}_x\right]\nonumber\\
&=& \max_{\rho_{AB}} \sum_x p_xp(b\neq x|x)\nonumber\\
&=& 1-\min_{\rho_{AB},N_b} \sum_x p_x \tr\left[(U_x\otimes\I)\rho_{AB}(U_x^\dagger\otimes\I)N_{b=x}\right]\nonumber\\
&=& \max_{\rho_{AB}} \mathcal{A}\left[\left\{(U_x\otimes\I)\rho_{AB}(U_x^\dagger\otimes\I)\right\}_x,\left\{p_x\right\}_x\right].
\eea
This expression suggests that if the input state is product state, this expression will reduce to \eqref{A_S}, i.e., the distinguishability with single system scenario. So, in the context of probing state, we use single state and product state alternatively. 
\begin{lemma}\label{suff_prob}
    For antidistinguishability in entanglement assisted scenario, the sufficient initial entangled state is of $d\otimes d$ dimension for $d$ dimensional unitaries.
\end{lemma}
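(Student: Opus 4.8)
The plan is to show that enlarging the ancilla beyond dimension $d$ cannot increase $\mathcal{A}_E$, so that a $d\otimes d$ probe already attains the supremum. First I would recall from \eqref{A_E} that antidistinguishing $\{U_x\}_x$ with an entangled probe $\rho_{AB}$ reduces to antidistinguishing the evolved states $\{(U_x\otimes\I)\rho_{AB}(U_x^\dagger\otimes\I)\}_x$, and that by convexity of $\mathcal{A}[\{\cdot\}_k,\{q_k\}_k]$ in the input (the quantity in \eqref{pA} is a maximum of affine functions of the $\rho_k$) the optimal probe can be taken to be pure, $\rho_{AB}=\ket{\psi}\bra{\psi}_{AB}$, living in $H_A\otimes H_B$ with $\dim H_A=d$ and $\dim H_B=d'$ arbitrary. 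The key fact I would invoke is the Schmidt decomposition: any such $\ket{\psi}_{AB}$ can be written as $\ket{\psi}=\sum_{l=1}^{d}\sqrt{\lambda_l}\,\ket{e_l}_A\ket{f_l}_B$ with at most $d$ nonzero terms, since $\mathrm{rank}(\rho_A)\le d$.

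The second step is to observe that only the span of the $\ket{f_l}_B$ — a subspace of $H_B$ of dimension at most $d$ — is ever touched: since each $U_x$ acts only on $A$, the evolved states $(U_x\otimes\I)\ket{\psi}\bra{\psi}(U_x^\dagger\otimes\I)$ are all supported on $H_A\otimes H_B'$ where $H_B'=\mathrm{span}\{\ket{f_l}_B\}$. Fixing an isometry $V:H_B'\hookrightarrow \mathbb{C}^d$ and conjugating, we get states on $\mathbb{C}^d\otimes\mathbb{C}^d$ with exactly the same pairwise overlaps $\bra{\psi}(U_i^\dagger U_j\otimes\I)\ket{\psi}$, hence — via \eqref{condAS}, or more generally since $\mathcal{A}$ depends on the evolved states only through their mutual inner products and these are preserved by the isometry $\I\otimes V$ — the same antidistinguishability value. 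Therefore $\mathcal{A}_E$ computed over all $d\otimes d'$ probes equals $\mathcal{A}_E$ computed over $d\otimes d$ probes.

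I would then package these observations: for any $d'\ge d$, $\max_{\rho_{AB}\in d\otimes d'}\mathcal{A}[\{(U_x\otimes\I)\rho_{AB}(U_x^\dagger\otimes\I)\}_x,\{p_x\}_x] = \max_{\rho_{AB}\in d\otimes d}(\cdots)$, and for $d'<d$ the $d'$-dimensional probes are a subset of the $d$-dimensional ones (pad with zeros), so nothing is lost by taking $d'=d$. This establishes that a $d\otimes d$ entangled state is sufficient.

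The main obstacle — really the only subtle point — is justifying that one may restrict to pure probes and that the measurement on the enlarged space gains nothing: concretely, that an optimal POVM on $H_A\otimes H_B$ can be assumed block-diagonal with respect to $H_B'\oplus (H_B')^\perp$, or equivalently that restricting the POVM to $H_A\otimes H_B'$ does not decrease the success probability. This follows because the evolved states have no support outside $H_A\otimes H_B'$, so replacing each $N_b$ by $(\I\otimes P')N_b(\I\otimes P')$ (with $P'$ the projector onto $H_B'$) leaves all traces $\tr[(U_x\otimes\I)\rho_{AB}(U_x^\dagger\otimes\I)N_{b=x}]$ unchanged while still summing to a sub-identity that can be completed to a POVM on $H_A\otimes H_B'$. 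I would spell this out carefully and then the reduction to the Schmidt subspace is immediate.
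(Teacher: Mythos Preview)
Your argument is correct. The reduction to pure probes via convexity, the Schmidt decomposition bounding the effective ancilla dimension by $d$, and the observation that the antidistinguishability of a set of pure states depends only on their Gram matrix (so that the isometric embedding $\I\otimes V$ changes nothing) together give a clean self-contained proof. The POVM-restriction step you flag is handled correctly.

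The paper's own proof is much terser and takes a different route: it simply cites the analogous result for distinguishability of two $d$-dimensional unitaries from \cite{maximallyentangled} and observes that, since antidistinguishability (via \eqref{condAS}) is governed by the same overlaps $\bigl|\bra{\psi}(U_i^\dagger U_j\otimes\I)\ket{\psi}\bigr|^2$, the sufficiency of a $d\otimes d$ probe carries over verbatim. In substance the underlying mechanism is the same Schmidt-rank argument you spell out (the overlaps equal $\tr(\rho_A U_i^\dagger U_j)$ and any $\rho_A$ admits a $d\otimes d$ purification), but the paper outsources it to the reference rather than reproving it. Your version buys self-containment and makes explicit the two points the citation hides: that pure probes suffice and that the optimal measurement can be confined to the Schmidt subspace.
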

\begin{proof}
     We can take the best possible entangled probing state $\ket{\psi}_{AB}\in\mathcal{C}^d\otimes\mathcal{C}^{d'}$. By Schmidt decomposition, we can always write the state $\ket{\psi}_{AB}=\sum_{l=1}^{d} C_l\ket{\eta_l}\ket{\chi_l}$, where $d'\geqslant d$. From the structure of the Schmidt decomposition, we see that the extra $(d'-d)$ basis states are redundant. 
    
In the other case, when $d'<d$, the input state can be written as $\ket{\psi'}_{AB}=\sum_{l=1}^{d'} C_l\ket{\eta_l}\ket{\chi_l}$. Such a state is not suitable for our purpose because it does not have support on all basis states $\ket{\eta_l}$, and hence cannot probe the full $d$-dimensional space.
 So the sufficiency of $d\otimes d$ dimensional entangled probe is proven hereby.

      For more details, one can check the ref. \cite{maximallyentangled}.
\end{proof}
For more detailed analysis, we need to compute the square of inner product terms $\left|\bra{\psi_{AB}} (U_i^\dagger\otimes\I) (U_j\otimes\I)\ket{\psi_{AB}}\right|^2$, where $i,j\in\{1,2,3\}$ and $\ket{\psi_{AB}} = \sum_{w=1}^{d} C_w\ket{\eta_w}_A\ket{\chi_w}_B$.
\bea\label{ent_con}
&&\left|\bra{\psi_{AB}} (U_i^\dagger\otimes\I) (U_j\otimes\I)\ket{\psi_{AB}}\right|^2\nonumber\\
&=& \left|\sum_{w=1}^d \left|C_w\right|^2\bra{\eta_w}U_i^\dagger U_j\ket{\eta_w}\right|^2\nonumber\\
&=& \left|\sum_{w=1}^d \left|C_w\right|^2\sum_{l=1}^d\left|\beta^w_{l_{ij}}\right|^2 e^{\mathbbm{i}\theta_{l_{ij}}}\right|^2\nonumber\\
&=&\left|\sum_{l=1}^d\sum_{w=1}^d \left|C_w\right|^2 \left|\beta^w_{l_{ij}}\right|^2 e^{\mathbbm{i}\theta_{l_{ij}}}\right|^2\nonumber\\
&=& \left|con\{e^{\mathbbm{i}\theta_{l_{ij}}}\}\right|^2.
\eea
One can check $\sum_{l=1}^d\sum_{w=1}^d |C_w|^2 |\beta^w_{l_{ij}}|^2 =1$, that means these coefficients indeed make a convex combinations of the eigenvalues $e^{\mathbbm{i}\theta_{l_{ij}}}$. Third line comes from the fact that $U_i^\dagger U_j=\sum_{l=1}^d e^{\mathbbm{i}\theta_{l_{ij}}}\ket{\psi_{l_{ij}}}\bra{\psi_{l_{ij}}}$ and $\ket{\eta_w}=\sum_{l=1}^d \beta^w_{l_{ij}}\ket{\psi_{l_{ij}}}$.
\section{Results}\label{results}
In this section, we present most of the results regarding the antidistinguishability of three unitaries and all the unitaries are sampled from equal probability distribution. Our first theorem reflects the usefulness of maximally entangled state as the probe in any dimension. 
\begin{thm}\label{thm1}
    All the maximally entangled states are equivalent for antidistinguishability of three unitaries.
\end{thm}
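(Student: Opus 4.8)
The plan is to reduce the statement to the three pairwise overlaps of the evolved states and then note that, for a maximally entangled probe, these overlaps are fixed by the unitaries alone. By \eqref{A_E} and \eqref{ent_con}, antidistinguishing $U_1,U_2,U_3$ with a pure probe $\ket{\psi_{AB}}$ is exactly antidistinguishing the three pure states $(U_x\otimes\I)\ket{\psi_{AB}}$, and by the criterion \eqref{condAS} whether those three states are (perfectly) antidistinguishable is decided solely by the triple $\big(x_{12},x_{13},x_{23}\big)$ with $x_{ij}=\left|\bra{\psi_{AB}}(U_i^\dagger U_j\otimes\I)\ket{\psi_{AB}}\right|^2$; the degenerate cases not covered by \eqref{condAS}, namely some $x_{ij}=0$ (two evolved states orthogonal, hence the set is trivially antidistinguishable) or some $x_{ij}=1$ (two evolved states equal up to a phase, hence not antidistinguishable), are likewise functions of that triple. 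So it suffices to show each $x_{ij}$ is the same for every maximally entangled state.

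To do that I would take an arbitrary maximally entangled state on $d\otimes d$ and use its Schmidt form $\ket{\psi_{AB}}=\frac{1}{\sqrt d}\sum_{w=1}^{d}\ket{\eta_w}_A\ket{\chi_w}_B$, where the Schmidt coefficients are all $1/\sqrt d$ precisely because the state is maximally entangled and $\{\ket{\eta_w}\}$, $\{\ket{\chi_w}\}$ are orthonormal bases. Substituting into \eqref{ent_con}, the inner products $\la\chi_w|\chi_{w'}\ra=\delta_{ww'}$ trace out system $B$ and leave $\bra{\psi_{AB}}(U_i^\dagger U_j\otimes\I)\ket{\psi_{AB}}=\frac1d\sum_{w}\bra{\eta_w}U_i^\dagger U_j\ket{\eta_w}=\frac1d\tr(U_i^\dagger U_j)$, the last equality holding because $\{\ket{\eta_w}\}$ is a complete orthonormal basis of $H_A$ and the trace does not depend on it. Hence $x_{ij}=\frac{1}{d^2}\left|\tr(U_i^\dagger U_j)\right|^2$, independently of which Schmidt bases (equivalently, which maximally entangled state) appear.

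Since $x_{12},x_{13},x_{23}$ then depend only on $\tr(U_1^\dagger U_2),\tr(U_1^\dagger U_3),\tr(U_2^\dagger U_3)$, the conditions \eqref{condAS} (and the degenerate cases) give the same verdict for every maximally entangled probe, which is the claim. There is really no serious obstacle here: the one step with any content is the collapse $\bra{\psi_{AB}}(U_i^\dagger U_j\otimes\I)\ket{\psi_{AB}}=\frac1d\tr(U_i^\dagger U_j)$ for a maximally entangled probe, and the only things to be careful about are that a maximally entangled state always has a flat Schmidt spectrum (immediate from the definition) and the handling of the orthogonal/coincident boundary cases noted above, both of which are again controlled by the same traces.
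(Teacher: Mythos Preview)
Your proposal is correct and follows essentially the same approach as the paper: both reduce the overlap $\bra{\psi_{AB}}(U_i^\dagger U_j\otimes\I)\ket{\psi_{AB}}$ for a maximally entangled probe to $\tfrac{1}{d}\tr(U_i^\dagger U_j)$ and then invoke basis independence of the trace. Your version is slightly more explicit in tying the overlaps back to the criterion \eqref{condAS} and in flagging the degenerate cases, but the core argument is identical.
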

\begin{proof}
    Let us take the maximally entangled state $\ket{\Phi^+}=\frac{1}{\sqrt{d}}\sum_{k=1}^d\ket{kk}$ as the initial input state. We need to compute $\left|\bra{\Phi^+} U_i^\dagger U_j\ket{\Phi^+}\right|^2$, where $i,j\in\{1,2,3\}$ and $i\neq j$. So,
    \bea\label{max_entg}
    \left|\bra{\Phi^+} U_i^\dagger U_j\ket{\Phi^+}\right|^2 &=& \frac{1}{d^2}\left|\sum_{k=1}^d\bra{k}U_i^\dagger U_j\ket{k}\right|^2\nonumber\\
    &=& \frac{1}{d^2}\left|\tr(U_i^\dagger U_j)\right|^2\nonumber\\
    &=& \frac{1}{d^2}\left|\sum_{l=1}^d e^{\mathbbm{i}\theta_{l_{ij}}}\right|^2.
    \eea
    As trace is basis independent property, it does not depend on the initial maximally entangled state and that gives our desired result.
\end{proof}
Now we are interested between the hierarchy of the probes. For the simplest case, we take any three unitaries acting on $\mathbbm{C}^2$ and show that the maximally entangled state is the best probe for perfect antidistinguishability. 

\begin{thm}\label{thm2}
   For the perfect antidistinguishability of three unitaries acting on  two dimensional Hilbert space,  the optimum probe is the maximally entangled probe. 
\end{thm}
\begin{proof}
        We prove this theorem by proving the following two statements:\\
       $(i)$ If any three unitaries acting on  two dimensional Hilbert space are antidistinguishable with non-maximally entangled probe, they will be antidistinguishable with maximally entangled probe.

       $(ii)$ If any three unitaries acting on  two dimensional Hilbert space are antidistinguishable with single system probe, they will be antidistinguishable with maximally entangled probe.\\
       It is needless to say that if these above statements are true, we can infer the Theorem \ref{thm2}.

       Suppose three unitaries $A_1,A_2,A_3$, which acting on $\mathbbm{C}^2$, are antidistinguishable with non-maximally entangled probe $\ket{\tau}$. Let us denote the following quantities:\\
       \bea\label{th8}
       g_1 &=& \left|\bra{\tau}A_1^\dagger A_2\ket{\tau}\right|^2\nonumber\\
       &=& \left|con\{e^{\mathbbm{i}\Theta^1_{12}}, e^{\mathbbm{i}\Theta^2_{12}}\}\right|^2,\nonumber\\
       g_2 &=& \left|\bra{\tau}A_2^\dagger A_3\ket{\tau}\right|^2\nonumber\\
       &=& \left|con\{e^{\mathbbm{i}\Theta^1_{23}}, e^{\mathbbm{i}\Theta^2_{23}}\}\right|^2,\nonumber\\
       g_3 &=& \left|\bra{\tau}A_3^\dagger A_1\ket{\tau}\right|^2\nonumber\\
       &=& \left|con\{e^{\mathbbm{i}\Theta^1_{31}}, e^{\mathbbm{i}\Theta^2_{31}}\}\right|^2.\nonumber\\
       \eea
   $e^{\mathbbm{i}\Theta^1_{ij}}$ and $e^{\mathbbm{i}\Theta^2_{ij}}$ are the eigenvalues of $A_i^\dagger A_j$. For non-maximally entangled state, the convex combinations of eigenvalues are of unequal probability weights, whereas maximally entangled probe gives convex combinations of equal weights. Now we will calculate the quantities of \eqref{th8} for both the non-maximally and maximally entangled probe. For non-maximally entangled probe, using \eqref{ent_con}, we find,
   \bea
   g_{1_{NM}} &=& \left|t e^{\mathbbm{i}\Theta^1_{12}}+(1-t) e^{\mathbbm{i}\Theta^2_{12}}\right|^2\nonumber\\
   &=& t^2+(1-t)^2+ 2 t(1-t)\cos{(\Theta^1_{12}-\Theta^2_{12})}\nonumber\\
   &=& \frac{1+\cos{(\Theta^1_{12}-\Theta^2_{12})}}{2}\nonumber\\
   &&+2(t-\frac12)^2(1-\cos{(\Theta^1_{12}-\Theta^2_{12})}).
   \eea
    The suffix $'NM'$ denotes "non-maximally entangled probe" and the suffix $'M'$ is used for "maximally entangled probe". Similarly, one can calculate $g_{2_{NM}}$ and $g_{3_{NM}}$. Now, for maximally entangled probe,
   \bea
    g_{1_{M}} &=&  \frac14\left|e^{\mathbbm{i}\Theta^1_{12}}+ e^{\mathbbm{i}\Theta^2_{12}}\right|^2\nonumber\\
    &=&\frac{1+\cos{(\Theta^1_{12}-\Theta^2_{12})}}{2}.
   \eea
   It is easy to notice that, for any values of $\Theta^1_{12}$ and $\Theta^2_{12}$, $g_{1_{M}}\leqslant g_{1_{NM}}$ as $(t-\frac12)^2\geqslant 0$ and $(1-\cos{(\Theta^1_{12}-\Theta^2_{12})})\geqslant 0$. $g_{1_{M}}= g_{1_{NM}}$ when $t=\frac12$. Similarly, $g_{2_{M}}\leqslant g_{2_{NM}}$ and $g_{3_{M}}\leqslant g_{3_{NM}}$. As per the statement of the theorem, the unitaries are antidistinguishable with non-maximally entangled state. From \eqref{condAS}, we can write,
   \bea\label{thm8}
    && g_{1_{NM}} + g_{2_{NM}} + g_{3_{NM}} < 1, \nonumber\\
    && (g_{1_{NM}}+g_{2_{NM}}+g_{3_{NM}}-1)^2 \geqslant 4g_{1_{NM}}g_{2_{NM}}g_{3_{NM}}.\nonumber\\
   \eea
   We know,
   $g_{1_{M}} + g_{2_{M}} + g_{3_{M}} \leqslant g_{1_{NM}} + g_{2_{NM}} + g_{3_{NM}}$, which is less than $1$ from \eqref{thm8}. So first condition is satisfied for maximally entangled probe. Now for the second condition, $(g_{1_{M}} + g_{2_{M}} + g_{3_{M}}-1)^2 \geqslant (g_{1_{NM}} + g_{2_{NM}} + g_{3_{NM}}-1)^2\geqslant 4g_{1_{NM}}g_{2_{NM}}g_{3_{NM}}\geqslant 4g_{1_{M}}g_{2_{M}}g_{3_{M}}$. So $g_{1_{M}}, g_{2_{M}}$ and $g_{3_{M}}$ satisfied the necessary and sufficient conditions of antidistinguishability. That achieves our claim of first statement.

If three unitaries $A_1,A_2,A_3$ are antidistinguishable with single system $\ket{\Delta}$, we can write from \eqref{As_conv},
      \bea
      g_{1_{S}}&=&\left|\bra{\Delta}A_1^\dagger A_2\ket{\Delta}\right|^2\nonumber\\
      &=& \left|t e^{\mathbbm{i}\Omega^1_{12}}+(1-t) e^{\mathbbm{i}\Omega^2_{12}}\right|^2,
      \eea
      where $t\in(0,1)$ and $e^{\mathbbm{i}\Omega^1_{12}}, e^{\mathbbm{i}\Omega^2_{12}}$ are the eigenvalues of $A_1^\dagger A_2$. We use the suffix 'S' to denote single system. Maximally entangled probe yields convex combination of the eigenvalues with equal weights.
      \bea
    g_{1_{M}} &=& \frac14\left|e^{\mathbbm{i}\Omega^1_{12}}+ e^{\mathbbm{i}\Omega^2_{12}}\right|^2.
      \eea
      We already proved that $g_{1_{M}} \leqslant g_{1_{NM}}=g_{1_{S}}$. A similar argument holds for the remaining terms, exactly as we did earlier. As $g_{1_{S}},g_{2_{S}},g_{3_{S}}$ satisfy the necessary and sufficient conditions of \eqref{condAS}, $g_{1_{M}}, g_{2_{M}}, g_{3_{M}}$ also satisfy the same conditions. This completes the proof.
\end{proof}
But if we increase the dimension of the unitaries to $3$, it is very easy to construct three unitaries which are antidistinguishable with non-maximally entangled state or single system but not antidistinguishable with maximally entangled state. 

\begin{widetext}
\begin{thm}\label{thr_3}
Consider the following three unitaries:\\
\bea\label{Vu}
V_1 &=& \ket{\omega_1}\!\bra{1} + \ket{\omega_2}\!\bra{2} + \ket{\omega_3}\!\bra{3} +\cdots+ \ket{\omega_d}\!\bra{d},\nonumber\\
V_2 &=& \left(\frac12\ket{\omega_1} +\frac{\sqrt{3}}{2}\ket{ \omega_2}\right)\!\bra{1} + \left(-\frac{\sqrt{3}}{2}\ket{\omega_1} +\frac12\ket{ \omega_2}\right)\!\bra{2} + \ket{\omega_3}\!\bra{3}+\cdots+ \ket{\omega_d}\!\bra{d},\nonumber\\
V_3 &=& \left(\frac12\ket{\omega_1} +\frac{\sqrt{3}}{2}\ket{ \omega_3}\right)\!\bra{1} + \ket{\omega_2}\!\bra{2} + \left(-\frac{\sqrt{3}}{2}\ket{\omega_1} +\frac12\ket{ \omega_3}\right)\!\bra{3} +\cdots+ \ket{\omega_d}\!\bra{d}.\nonumber\\
\eea
These unitaries are antidistinguishable with both the non-maximally entangled probe and the single system probe but not antidistinguishable with maximally entangled probe for $d\geqslant 3$.
\end{thm}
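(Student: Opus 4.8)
The plan is to produce an explicit probe for each direction and then feed the resulting squared overlaps into the pure-state criterion \eqref{condAS}, which is legitimate because, by \eqref{As_conv} and \eqref{ent_con}, antidistinguishing three unitaries on a pure probe is exactly antidistinguishing the three evolved pure states. First I would record the structure of the relative unitaries. Writing $V_1=\sum_{k}\ket{\omega_k}\bra{k}$ (a relabelling of the computational basis), one has $V_2=V_1W_2$ and $V_3=V_1W_3$, where $W_2$ is the $\pi/3$ rotation in $\mathrm{span}\{\ket{1},\ket{2}\}$ acting as the identity on the complement, and $W_3$ the $\pi/3$ rotation in $\mathrm{span}\{\ket{1},\ket{3}\}$. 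Hence $V_1^\dagger V_2=W_2$, $V_1^\dagger V_3=W_3$, and $V_2^\dagger V_3=W_2^\dagger W_3$, the last being a rotation supported on $\mathrm{span}\{\ket{1},\ket{2},\ket{3}\}$ and the identity elsewhere; this also confirms that all three $V_i$ are unitary. A quick trace count shows the nontrivial eigenvalues are $e^{\pm\mathbbm{i}\pi/3}$ for $W_2$ and $W_3$, and $e^{\pm\mathbbm{i}\phi}$ with $\cos\phi=1/8$ for $W_2^\dagger W_3$; all of these lie on arcs shorter than $\pi$, so no pair is distinguishable and antidistinguishability is the substantive question.

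For the positive direction I would test the single-system (equivalently product) probe $\ket{1}$. Then $\bra{1}W_2\ket{1}=\bra{1}W_3\ket{1}=\tfrac12$ and $\bra{1}W_2^\dagger W_3\ket{1}=\tfrac14$, so the three evolved states have pairwise squared overlaps $x_1=x_2=\tfrac14$ and $x_3=\tfrac1{16}$, all nonzero. One then checks $x_1+x_2+x_3=\tfrac9{16}<1$ together with $(x_1+x_2+x_3-1)^2=\tfrac{49}{256}\ge\tfrac1{64}=4x_1x_2x_3$, so \eqref{condAS} yields antidistinguishability with this probe. To upgrade to a genuinely non-maximally-entangled probe I would perturb to $\sqrt{1-\epsilon}\,\ket{11}+\sqrt{\epsilon}\,\ket{22}$: its reduced state leaves $x_1=\tfrac14$ unchanged and moves $x_2,x_3$ continuously, so both inequalities survive for $\epsilon>0$ small.

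For the negative direction I would invoke Theorem~\ref{thm1} and \eqref{max_entg}: a maximally entangled probe gives the quantities $x^M_{ij}=\tfrac1{d^2}\lvert\tr(V_i^\dagger V_j)\rvert^2$. Since $\tr W_2=\tr W_3=2\cos(\pi/3)+(d-2)=d-1$ and $\tr(W_2^\dagger W_3)=\tfrac54+(d-3)=d-\tfrac74$, we get $x^M_1=x^M_2=\tfrac{(d-1)^2}{d^2}$ and $x^M_3=\tfrac{(4d-7)^2}{16d^2}$, hence $x^M_1+x^M_2+x^M_3=\tfrac{48d^2-120d+81}{16d^2}$. Then $48d^2-120d+81>16d^2$ reduces to $32d^2-120d+81>0$, which holds for every integer $d\ge3$ (it equals $9$ at $d=3$ and is increasing there). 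Thus the first inequality of \eqref{condAS} is violated, so the set is not antidistinguishable with any maximally entangled probe when $d\ge3$.

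The whole argument is essentially careful bookkeeping; the one place to be watchful is computing the action and the trace of the composite rotation $W_2^\dagger W_3$ on the three-dimensional block, and then verifying that for the probe $\ket{1}$ the two inequalities of \eqref{condAS} point the right way while, for the maximally entangled probe, the first inequality fails uniformly in $d$. It is also worth confirming explicitly that the chosen probe keeps all three overlaps nonzero, so that \eqref{condAS} applies exactly as stated.
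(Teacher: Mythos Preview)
Your proof is correct and follows essentially the same route as the paper: you use the probe $\ket{1}$ (then perturb to $\sqrt{1-\epsilon}\,\ket{11}+\sqrt{\epsilon}\,\ket{22}$) and verify the conditions \eqref{condAS} for the positive direction, while for the negative direction you compute the same traces $d-1$, $d-1$, $d-\tfrac{7}{4}$ and show the sum of squared overlaps exceeds $1$ for $d\geq 3$. The only cosmetic differences are your helpful factorisation $V_i=V_1W_i$ and your continuity argument in place of the paper's explicit range $p\in\bigl(\tfrac{14-2\sqrt{7}}{21},1\bigr]$ for the same one-parameter family of probes.
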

\end{widetext}
\begin{proof}
    For non-maximally entangled state, we select the probe $(\sqrt{p}\ket{11}+\sqrt{1-p}\ket{22})$ with $p\in(\frac{14-2\sqrt{21}}{7},1]$, $p=1$ corresponds to single system. With this probe, the evolved states will be $\{\sqrt{p}\ket{\omega_1}\ket{1}+\sqrt{1-p}\ket{\omega_2}\ket{2},$ $ \sqrt{p}\left(\frac12\ket{\omega_1} + \frac{\sqrt{3}}{2}\ket{ \omega_2}\right)\ket{1} + \sqrt{1-p}\left(-\frac{\sqrt{3}}{2}\ket{\omega_1} + \frac12\ket{ \omega_2}\right)\ket{2},$ $\sqrt{p}\left(\frac12\ket{\omega_1} + \frac{\sqrt{3}}{2}\ket{ \omega_3}\right)\ket{1}+\sqrt{1-p}\ket{\omega_2}\ket{2}\}$. These states will be antidistinguishable if they satisfy the necessary and sufficient conditions given at \eqref{condAS}. 

    Using those conditions, we get the following two inequalities.
    \bea
&(i)& \frac14+(1-\frac{p}{2})^2+(\frac12-\frac{p}{4})^2 < 1;\\
&(ii)& \left(\frac14+(1-\frac{p}{2})^2+(\frac12-\frac{p}{4})^2-1\right)^2\nonumber\\
&&\geqslant 4\frac14(1-\frac{p}{2})^2(\frac12-\frac{p}{4})^2
    \eea
 Solving first inequality, we find $p\in[\frac{10-2\sqrt{15}}{5},1]$ and from second inequality, we get $p\in(\frac{14-2\sqrt{21}}{7},1]$. So the necessary range is $p\in(\frac{14-2\sqrt{21}}{7},1]$.

 Inspired from the previous calculation, We take the single system probe $\ket{1}$. The evolved states after unitaries being operated are $\{\ket{\omega_1}, \frac12\ket{\omega_1} +\frac{\sqrt{3}}{2}\ket{ \omega_2}, \frac12\ket{\omega_1} +\frac{\sqrt{3}}{2}\ket{ \omega_3}\}$. These three states are antidistinguishable as these states satisfy the conditions \eqref{condAS}. 
 
 So three unitaries described at \eqref{Vu} are antidistinguishable with non-maximally entangled probe as well as single system probe.
 
  For maximally entangled probe, we have to check the following quantities from \eqref{max_entg}:\\
   \bea
   y_1 &=&\frac{1}{d^2}\left|\tr(V_1^\dagger V_2)\right|^2= (1-\frac1d)^2,\nonumber\\
   y_2 &=&\frac{1}{d^2}\left|\tr(V_2^\dagger V_3)\right|^2= (1-\frac{7}{4d})^2,\nonumber\\
   y_3 &=&\frac{1}{d^2}\left|\tr(V_3^\dagger V_1)\right|^2= (1-\frac1d)^2.\nonumber\\
   \eea
     For $d\geqslant 3$, $y_1+y_2+y_3 > 1$. From  \eqref{condAS}, we can infer $V_1,V_2,V_3$ are not antidistinguishable with maximally entangled state. 
\end{proof}
Now we extend this hierarchy checking between single system and entangled system.

At this point, we diverted the next results to the characteristics of a set of antidistinguishable unitaries. Our next theorem depicts the closure property of antidistinguishable sets of three unitaries acting on  two-dimensional Hilbert space.
\begin{thm}
    If $\mathbbm{S}_1$ and $\mathbbm{S}_2$ are the two sets, each consists of three antidistinguishable qubit unitaries, the unitaries of the set $\mathbbm{S}_1\bigcup\mathbbm{S}_2$ are also antidistinguishable.
\end{thm}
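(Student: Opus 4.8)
The plan is to force both triples onto one common probe and then invoke the elementary ``union'' argument for antidistinguishable families of states. First, the probe may be assumed pure: if a mixed probe $\rho=\sum_i\lambda_i\ket{\psi_i}\!\bra{\psi_i}$ achieved perfect antidistinguishability with some measurement $\{N_b\}$, then $\sum_i\lambda_i\bra{\psi_i}(U_x^\dagger\otimes\I)N_x(U_x\otimes\I)\ket{\psi_i}=0$ together with $N_x\geqslant0$ forces every $\ket{\psi_i}$ in the support to work with the same $\{N_b\}$. By Lemma~\ref{suff_prob} this pure probe can be taken in a $2\otimes2$ system, hence of Schmidt rank at most two, so it is either a product (single-system) state, a non-maximally entangled state, or a maximally entangled state. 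Consequently, if $\mathbbm{S}_1=\{U_1,U_2,U_3\}$ is antidistinguishable it is antidistinguishable with a probe of one of these three types; Theorems~\ref{thm2} and \ref{thm4} upgrade the first two cases to the maximally entangled case, and Theorem~\ref{thm1} lets us fix this probe once and for all as $\ket{\Phi^+}=\tfrac1{\sqrt2}(\ket{11}+\ket{22})$. The identical reasoning applies to $\mathbbm{S}_2=\{W_1,W_2,W_3\}$, so both triples are antidistinguishable with the \emph{same} probe $\ket{\Phi^+}$.

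Next I would pass to the evolved states, which live in a $4$-dimensional space. Write $\ket{\alpha_x}=(U_x\otimes\I)\ket{\Phi^+}$ and $\ket{\beta_x}=(W_x\otimes\I)\ket{\Phi^+}$ for $x=1,2,3$. By the previous paragraph there exist POVMs $\{M_1,M_2,M_3\}$ and $\{N_1,N_2,N_3\}$ with $\bra{\alpha_x}M_x\ket{\alpha_x}=0$ and $\bra{\beta_x}N_x\ket{\beta_x}=0$ for every $x$. Put
\be
P_x=\tfrac12 M_x\quad(x=1,2,3),\qquad P_{x+3}=\tfrac12 N_x\quad(x=1,2,3).
\ee
Then each $P_x\geqslant0$, $\sum_{x=1}^{6}P_x=\tfrac12\I+\tfrac12\I=\I$, and $P_x$ annihilates the evolved state carrying label $x$. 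Hence the six states $(U_1\otimes\I)\ket{\Phi^+},\dots,(W_3\otimes\I)\ket{\Phi^+}$ are antidistinguishable, which is exactly the assertion that the unitaries of $\mathbbm{S}_1\cup\mathbbm{S}_2$ are antidistinguishable, with probe $\ket{\Phi^+}$ and measurement $\{P_x\}_{x=1}^{6}$. If $\mathbbm{S}_1$ and $\mathbbm{S}_2$ happen to share unitaries the same computation applies verbatim, reading $\mathbbm{S}_1\cup\mathbbm{S}_2$ as the list of (at most six) labelled alternatives.

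The only genuine obstacle is the first step: a priori $\mathbbm{S}_1$ and $\mathbbm{S}_2$ could be antidistinguishable only through \emph{different} optimal probes, and measurements attached to incompatible probes cannot simply be glued together. What rescues the statement --- and simultaneously explains why it is confined to qubits --- is exactly the universality of the maximally entangled probe for three qubit unitaries established in Theorems~\ref{thm1}, \ref{thm2} and \ref{thm4}. Once a common probe is secured the averaging of POVMs is routine, and the same argument in fact shows more: the union of any finite collection of antidistinguishable triples of qubit unitaries is again antidistinguishable.
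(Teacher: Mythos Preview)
Your proof is correct and follows essentially the same route as the paper: use Theorems~\ref{thm1}, \ref{thm2} and \ref{thm4} to place both triples on the common maximally entangled probe $\ket{\Phi^+}$, then conclude by the union property for antidistinguishable state families. The only difference is cosmetic: the paper cites Proposition~2 of \cite{Heinosaari_2018} for the union step, whereas you spell out the averaged POVM $P_x=\tfrac12 M_x$, $P_{x+3}=\tfrac12 N_x$ explicitly (and add the purity-of-probe justification, which the paper leaves implicit).
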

\begin{proof}
      From Theorems \ref{thm2}, we can assert that maximally entangled state is the sufficient probe to antidistinguish three unitaries acting on $\mathbbm{C}^2$. Theorem \ref{thm1} tells that any maximally entangled state is equivalent for the task. So that makes possible to find a common probe, i.e., any maximally entangled probe which can antidistinguish both the sets $\mathbbm{S}_1$ and $\mathbbm{S}_2$. After using this probe, we can denote the corresponding two sets of evolved states as $\mathbbm{S'}_1$ and $\mathbbm{S'}_2$, which are the sets of antidistinguishable states. From proposition $2$ of \cite{Heinosaari_2018}, we know that $\mathbbm{S'}_1\bigcup\mathbbm{S'}_2$ is also a set of antidistinguishable states. That facilitates that the unitaries of the set $\mathbbm{S}_1\bigcup\mathbbm{S}_2$ are antidistinguishable.
\end{proof}
Our next results tell the procedure of transforming a set of non-antidistinguishable unitaries to antidistinguishable unitaries.
\begin{thm}\label{thm7}
    Any finite set of qubit unitaries are either antidistinguishable or the set can be made antidistinguishable by adding one unitary.
\end{thm}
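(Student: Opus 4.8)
The plan is to prove the statement constructively. If $\{U_1,\dots,U_n\}$ is already antidistinguishable there is nothing to prove, so assume it is not. I would then show that adjoining the single unitary $U_{n+1}:=U_1X$, where $X=\ket{0}\!\bra{1}+\ket{1}\!\bra{0}$ is the bit flip, yields an antidistinguishable set. Since a product input is a special case of an entangled probe (the discussion after \eqref{A_E}), it suffices to exhibit a single-system probe for which \eqref{A_S} already equals $1$; the probe I would use is simply $\ket{0}$.

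The heart of the argument is the observation that with this probe the evolved state associated with $U_b$ is $\ket{\chi_b}:=U_b\ket{0}$, and $\ket{\chi_{n+1}}=U_1X\ket{0}=U_1\ket{1}$ is \emph{orthogonal} to $\ket{\chi_1}=U_1\ket{0}$, because $\la\chi_1|\chi_{n+1}\ra=\bra{0}U_1^\dagger U_1X\ket{0}=\bra{0}X\ket{0}=0$; in the language of \eqref{As_conv} this is just the statement that $U_1^\dagger U_{n+1}=X$ has eigenvalues $\pm1$ and $\ket{0}$ is the balanced superposition of the two eigenvectors, so the associated convex combination is $\frac12(+1)+\frac12(-1)=0$. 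I would then certify perfect antidistinguishability via \eqref{pA} with the POVM $\{N_b\}_{b=1}^{n+1}$ given by $N_1:=\ket{\chi_{n+1}}\!\bra{\chi_{n+1}}$, $N_{n+1}:=\ket{\chi_1}\!\bra{\chi_1}$, and $N_b:=0$ for $2\le b\le n$: each $N_b\ge0$, $\sum_b N_b=\ket{\chi_1}\!\bra{\chi_1}+\ket{\chi_{n+1}}\!\bra{\chi_{n+1}}=U_1(\ket{0}\!\bra{0}+\ket{1}\!\bra{1})U_1^\dagger=\I$, and $\tr(\ket{\chi_b}\!\bra{\chi_b}N_b)=0$ for every $b$ (for $b=1$ and $b=n+1$ this equals $|\la\chi_1|\chi_{n+1}\ra|^2=0$, and for $2\le b\le n$ it holds because $N_b=0$). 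Hence $\sum_x p_x\tr(U_x\ket{0}\!\bra{0}U_x^\dagger N_x)=0$, so $\{U_1,\dots,U_{n+1}\}$ is antidistinguishable.

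Two loose ends remain. First, one must check that $U_{n+1}=U_1X$ is genuinely an added unitary under the standing assumption: if $U_1X=e^{\mathbbm{i}\alpha}U_j$ for some $j\le n$, then $\bra{0}U_1^\dagger U_j\ket{0}=e^{\mathbbm{i}\alpha}\bra{0}X\ket{0}=0$, so $\ket{\chi_1}$ and $\ket{\chi_j}$ are already orthogonal and the measurement $N_1:=\ket{\chi_j}\!\bra{\chi_j}$, $N_j:=\ket{\chi_1}\!\bra{\chi_1}$, all other effects zero, already antidistinguishes $\{U_1,\dots,U_n\}$ --- contradiction. Second, there is no genuinely difficult step here; the only point to appreciate is that one should not try to optimise the probe at all --- it is enough to produce a single probe for which the enlarged family of evolved states contains an orthogonal pair, and since the Hilbert space is two-dimensional the orthogonal complement of $U_1\ket{0}$ is the single ray through $U_1\ket{1}$, which is exactly the evolved state manufactured by the added unitary $U_1X$. (The same construction works in any dimension with $U_1S$ for any unitary $S$ with $S\ket{0}\perp\ket{0}$, and an entanglement-assisted variant uses the probe $\ket{\Phi^+}$ and $U_1V$ with $\tr V=0$; this mirrors the elementary fact that any finite set of pure states becomes antidistinguishable after adjoining a state orthogonal to one of them.)
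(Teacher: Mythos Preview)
Your proof is correct and takes a genuinely different route from the paper's. The paper fixes an arbitrary single-system probe $\ket{\chi}$, reduces the problem to the evolved states $\{\mathcal U_k\ket{\chi}\}_k$, and then invokes Proposition~7 of \cite{Heinosaari_2018} (any finite set of pure qubit states can be made antidistinguishable by adjoining one pure state), implicitly using that any target qubit state is reachable as $\mathcal U_{n+1}\ket{\chi}$ for some unitary. You instead give a fully self-contained construction: the specific unitary $U_{n+1}=U_1X$, the specific probe $\ket{0}$, and the explicit two-outcome POVM built from the orthogonal pair $U_1\ket{0}$, $U_1\ket{1}$, checking directly that $\sum_x p_x\,\tr(U_x\ket{0}\!\bra{0}U_x^\dagger N_x)=0$. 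Your argument is more elementary (no external citation needed) and constructive (it names the unitary to add), while the paper's version is terser and makes the parallel with the state case explicit; both exploit the same underlying fact that adjoining a state orthogonal to one member renders any set antidistinguishable. Your handling of the edge case $U_1X\sim U_j$ is also a nice touch that the paper's proof leaves implicit.
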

\begin{proof}
    For any single qubit state $\ket{\chi}$ as a probing state, the antidistinguishability of any finite set of unitaries, i.e., $\{\mathcal{U}_1,\cdots,\mathcal{U}_n\}$ eventually reduces to the antidistinguishability of the set of evolved states $\mathbbm{S} = \{\mathcal{U}_1\ket{\chi},\cdots,\mathcal{U}_n\ket{\chi}\}.$ Proposition $7$ of \cite{Heinosaari_2018} tells that any set of qubit pure states are either antidistinguishable or it can be made antidistinguishable by adding one pure qubit state. Similarly we can say either $\mathbbm{S}$ is a set of antidistinguishable unitaries or we can add one more unitary to make the set antidistinguishable without changing the probing state.
\end{proof}

While Theorem \ref{thm7} ensures that any finite set of qubit unitaries can be extended to an antidistinguishable set, below we show that, having tensor product of unitaries does not, in general, preserve their non-antidistinguishability.

\begin{example}
    There exist two not antidistinguishable qubit unitaries $W_1 = \ket{0}\!\bra{0}+\ket{1}\!\bra{1}, W_2 = \ket{+}\!\bra{0}-\ket{-}\!\bra{1}$ such that the set $\{W_i\otimes W_j\}_{i,j=1}^2$ are antidistinguishable.
\end{example}

Eigenvalues of $W_1^\dagger W_2$ are $(\frac{1+\mathbbm{i}}{\sqrt{2}}, \frac{1-\mathbbm{i}}{\sqrt{2}})$. So none of the convex combination of these two eigenvalues can give zero, that means $W_1$ and $W_2$ are not distinguishable \cite{maximallyentangled}, which implies they are not antidistinguishable. \\
Now we take the following $2\otimes 2$ four unitaries $\{W_1\otimes W_1, W_1\otimes W_2, W_2\otimes W_1, W_2\otimes W_2 \}$. With the product probe $\ket{00}$, the four evolved states are $\{\ket{00},\ket{0+},\ket{+0},\ket{++}\}$. The ref.  \cite{pbr} showed that these four states are antidistinguishable with an entangled measurement.

Since indistinguishability and non-antidistinguishability are equivalent notions for a pair of unitaries, it is instructive to identify a set of three unitaries that are individually non-antidistinguishable yet whose tensor products constitute an antidistinguishable set.
\begin{example}
    There exist three not antidistinguishable qubit unitaries $Q_1 = \ket{0}\!\bra{0}+\ket{1}\!\bra{1}, Q_2 = \ket{\eta}\!\bra{0}-\ket{\eta^\perp}\!\bra{1}$ and $Q_3 = \ket{\zeta}\!\bra{0}-\ket{\zeta^\perp}\!\bra{1}$ such that the set $\{Q_i\otimes Q_j\}_{i,j=1}^3$ are antidistinguishable, where    $\ket{\eta}=\cos(5\pi/18)\ket{0}+\sin(5\pi/18)\ket{1}, \ket{\eta^\perp}=\sin(5\pi/18)\ket{0}-\cos(5\pi/18)\ket{1}, \ket{\zeta}=\cos(19\pi/60)\ket{0}+e^{\mathbbm{i}2\pi/3}\sin(19\pi/60)\ket{1} $ and $\ket{\zeta^\perp}=\sin(19\pi/60)\ket{0}-e^{\mathbbm{i}2\pi/3}\cos(19\pi/60)\ket{1}$.
\end{example}

    We have already established that any maximally entangled state serves as a sufficient probe for verifying the antidistinguishability of three unitaries acting on $\mathbbm{C}^2$ (Theorems~\ref{thm1} and \ref{thm2}). Using the Bell state  $
\ket{\phi^+} = \frac{1}{\sqrt{2}}(\ket{00} + \ket{11})
$
as the probe, the three unitaries \(\{Q_i\}_{i=1}^3\) generate the evolved states 
$
\Big\{\ket{\phi^+},\ \ket{\phi'} = \tfrac{1}{\sqrt{2}}(\ket{\eta}\ket{0} - \ket{\eta^\perp}\ket{1}),\ 
\ket{\Bar{\phi}} = \tfrac{1}{\sqrt{2}}(\ket{\zeta}\ket{0} - \ket{\zeta^\perp}\ket{1})\Big\}.
$
A direct calculation shows that 
$
|\langle\phi^+|\phi'\rangle|^2 + |\langle\phi'|\Bar{\phi}\rangle|^2 + |\langle\phi^+|\Bar{\phi}\rangle|^2 
< 4\, |\langle\phi^+|\phi'\rangle|^2\, |\langle\phi'|\Bar{\phi}\rangle|^2\, |\langle\phi^+|\Bar{\phi}\rangle|^2,
$
which is precisely the opposite of the second condition in~\eqref{condAS}. This indicates that the three states are not antidistinguishable, and consequently, the corresponding unitaries are not antidistinguishable either.

Next, we consider the extended set of unitaries \(\{Q_i \otimes Q_j\}_{i,j=1}^3\). Choosing the probe state \(\ket{00}\), we obtain the nine evolved states:
$
\{\ket{00},\ \ket{0 \eta },\ \ket{ 0\zeta},\ \ket{\eta 0},\ \ket{\eta \eta},\ \ket{\eta \zeta},\ \ket{\zeta 0},\ \ket{\zeta \eta},\ \ket{\zeta \zeta}\}.
$

 These nine states satisfy the sufficient condition of antidistinguishability given by \eqref{scond}. This completes the proof.

\section{Conclusion}
In this work, we initiated a systematic study of the single-shot antidistinguishability of unitary operations. By formulating the task in analogy with the well-studied case of quantum states, we demonstrated that the problem ultimately reduces to the antidistinguishability of appropriately evolved states for optimally chosen probe states. Our analysis revealed several key structural insights. For three unitaries, all the maximally entangled probes work equivalently in antidistinguishaing task. For three unitaries acting on  two-dimensional Hilbert space, we established that maximally entangled probes are always sufficient.  However, we also showed that this hierarchy breaks down in higher dimensions, where there exist examples of unitaries that are antidistinguishable only with single-system and non-maximally entangled probes but not with maximally entangled probes.
Beyond these structural results, we proved closure properties of antidistinguishable sets of three unitaries acting on $\mathbbm{C}^2$ and identified procedures for constructing antidistinguishable sets from those that are not. These findings parallel and extend known results in the antidistinguishability of quantum states, highlighting a deeper connection between the two frameworks.

Our results leave several open questions. Most notably, the characterization of antidistinguishable sets of higher-dimensional unitaries remains unexplored, and the role of entanglement in such settings warrants further investigation. Another intriguing direction is the exploration of multi-shot scenarios and the potential resource-theoretic formulation of antidistinguishability for quantum channels. We hope that this work lays the foundation for a broader understanding of the operational and structural aspects of antidistinguishability in quantum theory.
\subsection*{Acknowledgment}
The authors thank Debashis Saha for fruitful discussion. ADB acknowledges support from STARS (STARS/STARS-2/2023-0809), Govt. of India.
\bibliography{ref} 
\end{document}